\begin{document}

\newtheorem{lemme}{Lemma}
\newtheorem{prop}{Proposition}
\newtheorem{thm}{Theorem}
\newtheorem{corollary}{Corollary}

\author{Olivier Peltre}

\date{Université Denis Diderot, Paris, France}


\title{A Homological Approach to Belief Propagation and Bethe Approximations}
\maketitle

\begin{abstract}

We introduce a differential complex of 
local observables given a decomposition of a global set of random variables
into subsets.
Its boundary operator $\bord$ allows us to define
a transport equation $\dot u = \dr \Phi(u)$ 
equivalent to Belief Propagation.
This definition reveals a set
of conserved quantities under Belief Propagation
and gives new insight on the relationship of its equilibria with 
the critical points of Bethe free energy. 

\end{abstract}

\section*{Introduction}

\newcommand{\cobord}[1]{\delta\Lambda^{#1}}

A common feature of statistical physics and statistical
learning is to consider a very large number of random variables,
each of them mostly interacting
with only a small subset of neighbours.
Both lead to the effort of extracting relevant information about 
collective phenomena in spite of intractable global computations,
hence motivating the development of local techniques where only small 
enough subsets of variables are simultaneously considered.

In the present note, 
we substitute the space of global observables with a collection of local ones,
where relations between intersecting subsets of variables 
are translated into a boundary operator. 
Our construction yields a differential complex which we expose in section 1.
This allows for a homological interpretation
of some well-known techniques having motivated the present work,
its starting point being the equivalence established by Yedidia {\it et. al.}
between critical points of the Bethe free energy 
and fixed points of the Belief Propagation algorithm.
We review this beautiful theorem bridging statistical learning
and thermodynamics in section 2.

Bethe first proposed in 1935
a combinatorial method to compute an approximate value of the free energy 
per atom in a bipartite cristal, expanding computations over concentric
shells around a given lattice point \cite{Bethe-35}.
His method was soon generalised by Peierls \cite{Peierls-36}
and others, but its most important refinement was brought in 1951 by
Kikuchi's cluster variation method \cite{Kikuchi-51}. 
These various methods solve a common combinatorial problem, 
first related by Morita to the general theory of Möbius inversion \cite{Morita-57}
and we generically call them Bethe approximations.

In the following we denote by $\Om$ a finite set 
of variables, and by $X$ a collection $\aa, \bb, \cc$,... of subsets 
of $\Om$. 
The approximate value $\Fb$ of the global Gibbs free energy $\Fg_\Om$ is
computed as a weighted sum of each cluster's Gibbs free energy:
\[ \Fg_\Om \quad  \simeq  \quad \Fb = \sum_{\aa} c_\aa \Fg_\aa \]
where each $c_\aa$ is a suitably chosen integer to account for the redundancies
brought by intersecting clusters.
Through a variational principle on $\Fb$, 
one may try to approximate the marginals of the Gibbs state on each cluster.

Belief Propagation is a bayesian inference algorithm introduced by Pearl in 1988
\cite{Pearl-88}, equivalent to the low density decoding algorithms 
defined by Gallager in 1963 \cite{Gallager-63}. 
It was initially considered on graphs and
known to converge to the exact marginal probabilities on trees, 
while graphs with loops show degeneracy of its fixed points
\cite{Weiss-97,Murphy-Weiss-99}.  
The generalised version of Yedidia {\it et. al.} 
considers more general clusters $\aa \incl \Om$ \cite{Yedidia-2001,Yedidia-2005}.
Allowing for a better trade-off between complexity and accuracy,
it describes one of the widest class of learning algorithms 
.

Let us denote by $\Lambda^\aa$ the set of subclusters of $\aa$, 
and by $\cobord{\aa}$ the set of ordered pairs $\aa' \cont \bb'$ such that 
$\aa' \not\in \Lambda^\aa$ and $\bb' \in \Lambda^\aa$.
The algorithm defines a dynamic over a set of beliefs 
$(q_\aa)$ and a set of messages $(m_{\aa\bb})$,
where for $\aa$ strictly containing $\bb$, 
each $m_{\aa\bb}$ is a strictly positive observable on $\bb$ 
describing a message 
from $\Lambda^\aa \setminus \Lambda^\bb$ to $\Lambda^\bb$.
Given a collection of strictly positive functions $(f_\bb)$ 
accounting for priors and local evidence, 
each belief is the non-vanishing Markov field 
defined as:
\[ q_\aa = \Bigg[ 
\prod_{\bb' \in \Lambda^\aa} f_{\bb'} 
\; \times 
\prod_{\aa'\bb' \in \cobord{\aa}} m_{\aa'\bb'} 
\Bigg] \] 
where the bracket denotes normalisation.
Denoting by $\Sigma^{\bb\aa}$ the integration over the 
variables in $\aa \setminus \bb$, 
messages follow the update rule:
\[ m_{\aa\bb} \wa m_{\aa\bb} \cdot 
\Sigma^{\bb\aa} \left( 
\frac{
    \prod\limits_{\bb' \in \Lambda^\aa \setminus \Lambda^\bb} f_{\bb'} 
    \; \times
    \prod\limits_{\aa'\bb' \in \cobord{\aa} \setminus \cobord{\bb}} m_{\aa'\bb'}
}{
    \prod\limits_{\bb'\cc' \in \cobord{\bb} \setminus \cobord{\aa}} m_{\bb'\cc'} 
}
\right)
\]
which is sometimes called a sum-product algorithm.
Substituting beliefs, the update rule may take the much nicer form:
\[ m_{\aa\bb} \wa m_{\aa\bb} \cdot \frac {\Sigma^{\bb\aa}(q_\aa)} {q_\bb} \]
Belief Propagation essentially searches for a consistent statistical field $(q_\aa)$,
{\it i.e.} such that $q_\bb$ is the marginal of $q_\aa$ 
whenever $\bb$ is contained in $\aa$,
inside a particular subspace defined by the priors $(f_\aa)$.
This subspace can be characterised by a maximal set of conserved quantities
under a transport equation of the form $\dot u = \bord \Phi(u)$ 
equivalent to Belief Propagation, where $\bord$ is the boundary operator
in degree zero of a natural homology theory. 

This is part of a PhD work done under the kind supervision of Daniel Bennequin. 
I wish to thank him as well as Grégoire Sergeant-Perthuis 
and Juan-Pablo Vigneaux, for their sustained collaboration and 
many fruitful discussions.

\section{Differential and Combinatorial Structures}
\subsection{Statistical System}

\subsubsection{Regions.}

We will call system a finite set $\Omega$ equipped with a 
collection of subsets $X \incl \Part(\Omega)$ such that: 
\begin{itemize}[noitemsep,topsep=-6pt]
\iii the empty set $\vide$ is in $X$,
\iii if $\aa \in X$ and $\bb \in X$, then $\aa \cap \bb$ is also in $X$.
\end{itemize}
We say that $X$ is an admissible decomposition\footnote{
    A general $X \incl \Part(\Omega)$ is called a hypergraph. 
    These two axioms are necessary for the interaction decomposition
    (theorem 1) to hold.
} of $\Omega$
when it satisfies the above axioms. 
We call $\aa \in X$ a region\footnotemark{} of $\Omega$ 
and denote by $\Lambda^\aa \incl X$ the subsystem of 
those regions contained in $\aa$.

\footnotetext{
    The term is chosen to refer to the notion of region-graphs
    in Yedidia et al. 
}

We may view $X$ as a
subcategory of the partial order $\Part(\Omega)$
having an arrow $\aa \aw \bb$ whenever $\bb \incl \aa$.
The above axioms state that $\vide$ is a final element of $X$ 
and that $X$ has sums.

\subsubsection{Chains and Nerve.}

A $p$-chain $\bar{\aa}$
is a totally ordered sequence $\aa_0 \aw \dots \aw \aa_p$ in $X$.
It is said non-degenerate when all inclusions are strict and
we write $\bar{\aa} = \aa_0 > \dots > \aa_p$ 
to emphasize on non-degeneracy.
A $p$-chain $\bar{\aa}$ may be viewed as a $p$-simplex.
Its $p+1$ faces are the chains $(\dr_k \bar\aa)_{0 \leq k \leq p}$ 
of degree $p-1$ obtained by removing $\aa_k$. 

The nerve of $X$ is the simplicial complex\footnote{
    For the general theory of simplicial complexes, see \cite{Eilenberg-Steenrod}.
}
$NX = \bigcup_p N^p X$ formed by all non-degenerate chains.
Its dimension is the degree of the longest non-degenerate chain in $X$,
bounded by the cardinal of $\Omega$. 

\subsubsection{Microscopic States.}

For each $i \in \Omega$, suppose given a finite set $E_i$.
A microscopic state of a region $\aa \incl \Omega$ is an element of the 
cartesian product\footnote{
The configuration space $E_\vide$ is thus a point, 
unit for the cartesian product and terminal element in $\Set$.
}:
\[ E_\aa = \prod_{i \in \aa} E_i \]
An element $x_\aa \in E_\aa$ is also called a configuration of $\aa$.
Configuration spaces yield a functor from $\Part(\Omega)$ to $\Set$ 
and we denote by $\pi^{\bb\aa} : E_\aa \aw E_\bb$ the canonical projection
of $E_\aa$ onto $E_\bb$ whenever $\bb$ is a subregion of $\aa$.

In the following, we will be mostly interested in 
the restriction of such a functor
$E : \Part(\Omega) \aw \Set$ 
to an admissible decomposition $X \incl \Part(\Omega)$.

\subsection{Scalar Fields}

\subsubsection{Differentials.}

We call scalar field a collection $\lambda \in \R(X)$ 
of scalars indexed by the nerve of $X$. 
We denote by $\R_p(X)$ the space of $p$-fields, vanishing everywhere but 
on the $p$-simplices of $NX$:
\[ \R_p(X) = \R^{N^p X} \]

Using the canonical scalar product of $\R(X)$,
scalar fields can be identified with chains or cochains 
with real coefficients in $NX$ eitherwise.
We denote by $\bord$ the boundary operator of $\R(X)$ and
by $d$ its adjoint differential\footnote{
    We shall give formulas for these operators shortly. 
    A nice presentation can be found in Kodaira \cite{Kodaira-49}
}:
\[ \begin{array}{cc}
\bord : & \R_0(X) \wa \R_1(X) \wa \dots \\
d :     & \R_0(X) \aw \R_1(X) \aw \dots 
\end{array} \]
Because $X$ has $\vide$ as final element, 
$NX$ is a retractable simplicial complex and its homology is acyclic.

\subsubsection{Convolution.}

We denote by $\tilde\R_1(X) = \R_0(X) \oplus \R_1(X)$ the space of
scalar fields of degree lower than one.
Equivalently, an element of $\tilde\R_1(X)$
is indexed by general $1$-chains in $X$, 
identifying the degenerate $1$-chain $\aa \aw \aa$ with $\aa$.

Equipped with Dirichlet convolution,
$\tilde\R_1(X)$ is the incidence algebra\footnote{
    See Rota \cite{Rota-64} for a deeper treatment of these 
    combinatorial structures.
} of $X$.
The product of $\ph$ and $\psi \in \tilde\R_1(X)$ 
is defined for every $\aa \aw \cc$ by:
\[ (\ph * \psi)_{\aa\cc} = \sum_{\aa \aw \bb' \aw \cc}
\ph_{\aa\bb'} \cdot \psi_{\bb'\cc} \]
The unit of $*$ is $1 \in \R_0(X)$, sometimes viewed
as a Kronecker symbol in $\tilde\R_1(X)$.

The incidence algebra of $X$ has two natural actions on the $0$-fields.
They give $\R_0(X)$ a $\tilde\R_1(X)$ bi-module structure. 
The left action of $\ph \in \tilde\R_1(X)$ on $\lambda \in \R_0(X)$ is given by:
\[ (\ph \cdot \lambda)_\aa = \sum_{\aa \aw \bb'} \ph_{\aa\bb'} \lambda_{\bb'} \] 
We will sometimes denote by $\ph$ and $\ph^*$ the two adjoint 
endomorphisms of $\R_0(X)$
associated to the left and right action of $\ph$ respectively.

\subsubsection{Möbius Inversion.}

We denote by $\zeta \in \tilde\R_1(X)$ the Dirichlet zeta function 
defined by $\zeta_{\aa\bb} = 1$ for every $\aa \aw \bb$ in $X$.
The $k$-th power of $\zeta$ 
counts chains of length $k$ 
and $(\zeta - 1)^{*k}_{\aa\bb}$ is the number of non-degenerate $k$-chains
from $\aa$ to $\bb$. 

Because $X$ is locally finite\footnote{
    $X$ is locally finite if for any $\aa,\bb \in X$ 
    there is only a finite number of non-degenerate chains from $\aa$ to $\bb$.
}, 
we easily recover the fundamental theorem of Möbius inversion
stating that $\zeta$ is invertible. Its inverse $\mu$ satisfies:
\[ \mu_{\aa \bb} = \sum_{k \geq 0} (-1)^k (\zeta - 1)^{*k}_{\aa\bb} \]
The field $\mu \in \tilde\R_1(X)$ is known as the Möbius function 
and is valued in $\Z$. 

We also introduce the right Möbius numbers as the coefficients $c \in \R_0(X)$ 
defined by:
\[ c_\bb = \sum_{\aa' \aw \bb} \mu_{\aa'\bb} = (1 \cdot \mu)_\bb \] 
These integers satisfy the following <<inclusion-exclusion>> formula:
\[ (c \cdot \zeta)_\bb = \sum_{\aa' \aw \bb} c_\bb = 1 \]

\subsection{Observables, Densities and Statistical States.}

\subsubsection{Observable Fields.}

An observable on a region $\aa \incl \Om$ is a real function 
on $E_\aa$, we denote by $\obs_\aa = \R^{E_\aa}$ the commutative algebra
of observables on $\aa$. 
For every subregion $\bb \incl \aa$, any observable $u_\bb \in \obs_\bb$
admits a cylindrical extension,
$j_{\aa\bb}(u_\bb) \in \obs_\aa$ defined by pulling back
the projection of $E_\aa$ onto $E_\bb$:
\[ j_{\aa\bb}(u_\bb) = u_\bb \circ \pi^{\bb\aa} \]
The injections $j_{\aa\bb}$ make $\obs:X^{op} \aw \Alg$
a contravariant functor to the category of unital real algebras.
We denote by $1_\aa$ the unit of $\obs_\aa$ and 
will generally drop the injection in our notation.  

We call observable field an allocation of observables on the nerve of $X$,
defining an observable on a $p$-chain $\bar\aa \in N^p X$ 
to be an observable on its smallest region $\aa_p$. 
We write $\obs_{\bar \aa}$ for the copy of $\obs_{\aa_p}$ on  $\bar \aa$,
and denote the space of $p$-fields by:
\[ \obs_p(X) = \bigoplus_{\bar \aa \in N^p X} \obs_{\bar \aa} \]
and by $\obs(X)$ the graded vector space of observable fields.
If $\bar \bb$ is a face of $\bar \aa$, its terminal region contains that
of $\bar \aa$. 
It follows that there is a canonical injection 
$j_{\bar \bb \bar \aa} : \obs_{\bar \aa} \aw \obs_{\bar \bb}$ whenever
$\bar \bb$ is a subchain of $\bar \aa$.\footnote{
    Observable fields form a simplicial algebra 
    $\obs(X) : \Ord^{op} \aw \Alg$. 
    Our construction is related to this more 
    general theory,
    see Segal's note on classifying spaces \cite{Segal-68} for instance.
}

The boundary operator $\bord : \obs_{p+1}(X) \aw \obs_p(X)$
 is defined on a $(p+1)$-field $\ph$ by the following formula:
\[ (\bord \ph)_{\bar \bb}  = \sum_{k=0}^{p+1} \,
\sum_{\bord_k \bar \aa' = \bar \bb} (-1)^k \, j_{\bar \bb \bar \aa'}(\ph_{\bar \aa'}) \]
The proof of $\bord^2 = 0$ is done as usual, 
it uses functoriality of the injections but otherwise does not differ from
the case of scalar coefficients.
In the case where $p = 0$, we have for example:
\[ \bord_\bb \ph = \sum_{\aa' \aw \bb} \ph_{\aa'\bb} 
- \sum_{\bb \aw \cc'} \ph_{\bb \cc'} \]

Each $\obs_\aa$ 
may also be viewed as the abelian Lie algebra of the multiplicative
Lie group 
$G_\aa = (\R_+^*)^{E_\aa}$
of strictly positive observables on $\aa$.
We denote by: 
\[ \expm_\aa : \obs_\aa \aw G_\aa \]
the group isomorphism $u_\aa \mapsto \e^{-u_\aa}$ 
and by $\mlog_\aa$ its inverse.

\subsubsection{Density Fields.}

We call density on a region $\aa \incl \Om$ a linear form on observables 
$\om_\aa \in \obs^*_\aa$. For every $\bb \incl \aa$, we denote 
by $\Sigma^{\bb\aa}(\om_\aa) \in \obs^*_\bb$ the density defined 
by integrating $\om_\aa$ along the fibers of $\pi^{\bb\aa}$:
\[ \Sigma^{\bb \aa}(\omega_\aa)(x_\bb) = 
\sum_{x' \in E_{\aa \setminus \bb}} \omega_\aa(x_\bb, x') \]
The marginal projections $\Sigma^{\bb\aa}$ make $\obs^* : X \aw \Vect$
a covariant functor. 
Each $\Sigma^{\bb\aa}$ is
dual to $j_{\aa\bb}$ as 
for every $\omega_\aa \in \obs^*_\aa$ and every $u_\bb \in \obs_\bb$, 
we have:
\[ \croc{\Sigma^{\bb \aa}(\omega_\aa)}{u_\bb} 
= \croc{\omega_\aa}{j_{\aa\bb}(u_\bb)} \]

We define the complex $\obs^*(X)$ of density fields
by duality with $\obs(X)$, 
and denote its differential by $\diff$, adjoint of $\bord$. 
The action of $\diff$ on a density $p$-field $\omega$ 
is the $(p+1)$-field defined by:
\[ (\diff \omega)_{\bar \aa} = \sum_{k = 0}^p (-1)^k \, 
\Sigma^{\bar \aa \, \bord_k \bar \aa}(\omega_{\bord_k \bar \aa}) \]
writing $\Sigma^{\bar \aa \bar \bb}$ for the adjoint of $j_{\bar \bb \bar \aa}$.
For $p = 0$, we have:
\[ (\diff \omega)_{\aa \bb} = \omega_\bb - \Sigma^{\bb \aa}(\omega_\aa) \]
We say that a field $\omega \in \obs^*_0(X)$ is consistent if 
$d\om = 0$. 
For every region $\aa \in X$ and
every subregion $\bb \incl \aa$, the marginal
of $\omega_\aa$ on $\bb$ is then $\omega_\bb$.

A density $\om_\aa$ on $\aa$ is said positive if $\om_\aa(u^2_\aa) \geq 0$
for every observable $u_\aa$.
The counting measure on $E_\aa$, denoted by $1^\flat_\aa$,
is positive. Its free orbit under multiplication by the observables of $G_\aa$
is the space of strictly positive densities, 
we will sometimes identify $G_\aa$ with $G_\aa \cdot 1^\flat_\aa \incl \obs^*_\aa$.

\subsubsection{Statistical Fields.}

A statistical state on a region $\aa \incl \Om$ is a positive 
density $p_\aa \in \obs^*_\aa$ such that $p_\aa(1_\aa) = 1$.
Equivalently, it is a probability measure on $E_\aa$ and 
we denote by $\Delta_\aa = \Prob(E_\aa)$ the convex set of 
statistical states on $\aa$.
It is a topological simplex of dimension $|E_\aa| - 1$ 
and its image under $\Sigma^{\bb\aa}$ is $\Delta_\bb$, so that
$\Delta : X \aw \Top$ defines a covariant functor of topological spaces.
Any non-trivial positive density $\om_\aa \in \obs^*_\aa$ defines a
normalised statistical state which we denote by $[\om_\aa] \in \Delta_\aa$.

The interior $\Del_\aa$ of the probability simplex on $E_\aa$
consists of all the non-vanishing probability densities. 
It is diffeomorphic to the quotient of $G_\aa$ by scalings of $\R_+^*$,
itself isomorphic to the quotient of $\obs_\aa$ by the action of additive constants:
\[ \Del_\aa \sim \P G_\aa \sim \obs_\aa / \R \]
It follows that $\Del_\aa$ has a natural Lie group structure.
The extensions $j_{\aa\bb}$ admit a quotient by scalings
and $\Del : X^{op} \aw \Grp$ also defines a contravariant functor.
The Gibbs state associated to any observable $U_\aa \in \obs_\aa$ 
is the non-vanishing probability density $[\e^{-U_\aa}] \in \Del_\aa$
and we denote by:
\[ \gibbs_\aa : G_\aa \aw \Del_\aa \]
the surjective group morphism induced by Gibbs states.

The convex set $\Delta(X) \incl \obs^*(X)$ of statistical fields 
may be defined by inclusion of each $\Delta_\aa \incl \obs^*_\aa$
and we denote by $\Del(X)$ the interior of $\Delta(X)$. 
We say that $p \in \Delta(X)$ is a consistent statistical field if $dp = 0$ 
and denote by $\Zp \incl \Del_0(X)$ 
the subset of non-vanishing consistent statistical fields.

\subsection{Homology}

\subsubsection{Gauss Formula.}

For every region $\aa \in X$, let us define the coboundary 
of the subsystem $\Lambda^\aa$ as the subset of arrows 
$\cobord{\aa} \incl N^1 X$ that meet $\Lambda^\aa$ but  
are not contained in $\Lambda^\aa$: 
\[ \cobord{\aa} = 
\{ \aa' \aw \bb' \st \aa' \not\in \Lambda^\aa \txt{and} \bb' \in \Lambda^\aa \}
\]
The following proposition may then be thought of as a Gauss formula on $\Lambda^\aa$: 

\begin{prop}
    For every $\ph \in \obs_1(X)$ and $\aa \in X$ we have:
    \[ \sum_{\bb' \in \Lambda^\aa} \bord_{\bb'} \ph = 
    \sum_{\aa'\bb' \in \cobord{\aa}} \ph_{\aa'\bb'} \]
    In particular, the above vanishes if $\ph$
    is supported in $\Lambda^\aa$.
\end{prop}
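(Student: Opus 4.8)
The plan is to expand the left-hand side with the degree-zero boundary formula and then, after reorganising it as a signed sum over the arrows of $NX$, to recognise the surviving terms as the coboundary sum indexed by $\cobord{\aa}$. Since every $\bb' \in \Lambda^\aa$ satisfies $\bb' \incl \aa$, each summand $\bord_{\bb'}\ph \in \obs_{\bb'}$ extends canonically to $\obs_\aa$, so I read the claimed identity inside $\obs_\aa$, dropping the injections $j_{\aa\bb'}$ from the notation as elsewhere in the text. Substituting $\bord_{\bb'}\ph = \sum_{\aa' \aw \bb'} \ph_{\aa'\bb'} - \sum_{\bb' \aw \cc'} \ph_{\bb'\cc'}$ and summing over $\bb' \in \Lambda^\aa$, I split the result into a positive part $P$, gathering $\ph_{\aa'\bb'}$ over all arrows $\aa' \aw \bb'$ whose \emph{target} lies in $\Lambda^\aa$, and a negative part $N$ which, after renaming the arrow $\bb' \aw \cc'$ as $\aa' \aw \bb'$, gathers $\ph_{\aa'\bb'}$ over all arrows whose \emph{source} lies in $\Lambda^\aa$.

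The key structural observation is that $\Lambda^\aa$ is downward closed: if an arrow $\aa' \aw \bb'$ has source $\aa' \incl \aa$, then $\bb' \incl \aa' \incl \aa$, so its target lies in $\Lambda^\aa$ as well. Hence the arrows counted by $N$ are exactly the arrows lying entirely inside $\Lambda^\aa$, and they form a subset of those counted by $P$. Subtracting, the arrows with both endpoints in $\Lambda^\aa$ cancel between $P$ and $N$, and what remains is precisely the sum over arrows whose target lies in $\Lambda^\aa$ while the source does not. By definition these are the elements of $\cobord{\aa}$, which gives the announced equality.

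The one point demanding care — and essentially the only place where anything beyond bookkeeping enters — is the cancellation of the interior arrows, since a fixed $\ph_{\aa'\bb'}$ appears in $P$ as an element of $\obs_{\bb'}$ but in $N$ as $j_{\aa'\bb'}(\ph_{\aa'\bb'}) \in \obs_{\aa'}$. These push forward to the same element of $\obs_\aa$ exactly because the injections are functorial, $j_{\aa\bb'} = j_{\aa\aa'} \circ j_{\aa'\bb'}$ for $\bb' \incl \aa' \incl \aa$, which is the same compatibility underlying $\bord^2 = 0$; I expect this to be the thing to check carefully rather than a genuine obstacle. The final assertion is then immediate: if $\ph$ is supported in $\Lambda^\aa$, every arrow $\aa' \aw \bb'$ in $\cobord{\aa}$ has source $\aa' \not\in \Lambda^\aa$, so each $\ph_{\aa'\bb'}$ vanishes and the whole sum is zero.
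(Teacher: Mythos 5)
Your argument is correct and is essentially the paper's own proof: the paper likewise observes that in the sum of $\bord\ph$ over $\Lambda^\aa$ each interior term $\ph_{\bb\cc}$ with both $\bb$ and $\cc$ in $\Lambda^\aa$ appears twice with opposite signs and cancels, leaving exactly the coboundary sum. Your additional remarks on reading the identity in $\obs_\aa$ via the functorial injections, and on the downward closure of $\Lambda^\aa$ forcing every outgoing arrow to stay inside, simply make explicit what the paper's one-line proof leaves implicit.
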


\begin{proof} In the sum of $\bord \ph$ over $\Lambda^\aa$, 
each term $\ph_{\bb\cc}$ appears with opposite signs 
in $\bord_{\bb}\ph$ and $\bord_{\cc} \ph$ 
whenever both $\bb$ and $\cc$ belong to $\Lambda^\aa$.
\end{proof}

We may write a similar formula on the cone $V_\bb$ over $\bb$ in $X$,
formed by all the regions that contain $\bb$,
and define its coboundary $\delta V_\bb$ as the set of arrows 
leaving $V_\bb$.
The sums however need to be embedded in the space of global observables.

\begin{prop}
    For every $\ph \in \obs_1(X)$ and $\bb \in X$ we have:
    \[ \sum_{\aa' \in V_\bb} \bord_{\aa'}\ph = 
    \sum_{\aa'\bb' \in \delta V_\bb} \ph_{\aa'\bb'} \]
    as global observables of $\obs_\Om$. 
\end{prop}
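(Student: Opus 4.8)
The plan is to reproduce the cancellation argument used for the first Gauss formula (Proposition 1), replacing the subsystem $\Lambda^\aa$ by the cone $V_\bb$. The one genuinely new feature is that the regions $\aa' \in V_\bb$ over which we sum grow without being contained in any common region of $X$, so the identity can only be read after cylindrical extension into the global algebra $\obs_\Om$.

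First I would expand each term of the left-hand side with the degree-zero boundary formula,
\[ \bord_{\aa'}\ph = \sum_{\cc' \aw \aa'} \ph_{\cc'\aa'} \;-\; \sum_{\aa' \aw \bb'} j_{\aa'\bb'}(\ph_{\aa'\bb'}), \]
and apply $j_{\Om\aa'}$ to each $\bord_{\aa'}\ph$ so that every summand lives in $\obs_\Om$. Functoriality of the extensions, $j_{\Om\aa'} \circ j_{\aa'\bb'} = j_{\Om\bb'}$, ensures that a given component $\ph_{\aa'\bb'}$ always produces the same global observable $j_{\Om\bb'}(\ph_{\aa'\bb'})$, whichever $\bord_{\aa'}$ it passes through; this is what makes the collected double sum well defined and is the place where embedding in $\obs_\Om$ is forced.

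Next I would gather, for each arrow $\aa' \aw \bb'$ of $X$, the coefficient of $j_{\Om\bb'}(\ph_{\aa'\bb'})$. It arises twice: with sign $+$ through $\bord_{\bb'}\ph$, as an arrow into its target $\bb'$, provided $\bb' \in V_\bb$; and with sign $-$ through $\bord_{\aa'}\ph$, as an arrow out of its source $\aa'$, provided $\aa' \in V_\bb$. Since $V_\bb$ is an up-set for inclusion and $\bb' \incl \aa'$, the condition $\bb' \in V_\bb$ already forces $\aa' \in V_\bb$; hence every arrow whose target lies in $V_\bb$ is interior and its two contributions cancel, exactly as in Proposition 1. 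The surviving arrows are precisely those with source in $V_\bb$ and target outside, which is the set $\delta V_\bb$, and summing them gives the claimed identity in $\obs_\Om$.

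I expect the only delicate point to be the sign and embedding bookkeeping rather than any substantial difficulty: one must confirm that each interior arrow is counted once by each endpoint with opposite signs, that each leaving arrow survives through a single $\bord_{\aa'}$ with $\aa' \in V_\bb$, and that the resulting global sign matches the orientation convention fixing $\bord$. Once this is checked against the explicit degree-zero formula, the proof is complete.
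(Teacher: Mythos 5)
Your argument is the right one, and it is exactly the argument the paper intends: the paper gives no separate proof of this proposition, relying on the same endpoint-cancellation used for Proposition 1, with the cone $V_\bb$ (an up-set) in place of the down-set $\Lambda^\aa$, and with the extra step you correctly identify of pushing everything into $\obs_\Om$ via $j_{\Om\aa'}$ since the regions of $V_\bb$ need not lie in a common region of $X$. Your observation that no arrow can enter $V_\bb$ from outside (target in $V_\bb$ forces source in $V_\bb$) is the key point and is correct.

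The one item you explicitly deferred --- the global sign --- is the one that does not come out as you assert. With the stated convention $\bord_{\aa'}\ph = \sum_{\cc'\aw\aa'}\ph_{\cc'\aa'} - \sum_{\aa'\aw\bb'}\ph_{\aa'\bb'}$, every interior arrow cancels, but the surviving arrows of $\delta V_\bb$ enter only through the \emph{outgoing} sum of their source $\aa'\in V_\bb$, hence each contributes $-\,j_{\Om\bb'}(\ph_{\aa'\bb'})$. So the computation yields $\sum_{\aa'\in V_\bb}\bord_{\aa'}\ph = -\sum_{\aa'\bb'\in\delta V_\bb}\ph_{\aa'\bb'}$, and the identity as printed requires either this minus sign or an oriented reading of $\delta V_\bb$ (outward flux counted negatively, as in the divergence theorem); note the sign is $+$ in Proposition 1 precisely because there the boundary arrows are \emph{incoming} to the down-set $\Lambda^\aa$. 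You should close this check rather than leave it as bookkeeping, since it changes the statement.
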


\subsubsection{Interaction Decomposition.}

For each region $\aa \in X$, 
We call boundary observable on a region $\aa \in X$ any observable 
generated by observables on strict subregions of $\aa$. 
We denote by $\b_\aa$ the subspace of boundary observables:
\[ \b_\aa = \sum_{\aa > \bb'} \obs_{\bb'} \]
Suppose chosen for every $\aa$ a supplement $\z_\aa$ of boundary observables, 
called interaction subspace of $\aa$,
so that we decompose $\obs_\aa$ as:
\[ \obs_\aa = \z_\aa \oplus \b_\aa \]
We may inductively continue this procedure on $\b_\aa$,
as illustrated by the following well known\footnote{
    The first appearance of this now very common result in statistics
    seems to be in Kellerer \cite{Kellerer-64}. 
    See also \cite{Matus-88} for a proof via harmonic analysis
}
theorem. 

\begin{thm}[Interaction Decomposition]
Given supplements $(\z_\aa)$ of boundary observables 
for every $\aa \in X$, we have the decompositions:
\[ \obs_\aa = \bigoplus_{\aa \aw \bb'} \z_{\bb'} \]
\end{thm}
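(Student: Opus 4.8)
The plan is to show that the canonical linear map $\bigoplus_{\aa \aw \bb'} \z_{\bb'} \aw \obs_\aa$, assembled from the cylindrical extensions $j_{\aa\bb'}$, is an isomorphism. Since $X$ is finite and strict inclusion is well-founded, I would argue by strong induction on the region $\aa$; the base case $\aa = \vide$ is immediate, as $\b_\vide = 0$ forces $\z_\vide = \obs_\vide$.

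The arithmetic backbone is a single lattice identity, and this is where the admissibility of $X$ enters: for subregions $\bb, \cc \incl \aa$ one has $\obs_\bb \cap \obs_\cc = \obs_{\bb \cap \cc}$ inside $\obs_\aa$. Here $\supseteq$ is clear, while for $\subseteq$, if $f$ depends only on the $\bb$-coordinates and only on the $\cc$-coordinates, then for configurations $x, x'$ agreeing on $\bb \cap \cc$ the configuration $z$ taking the $\bb$-coordinates of $x$ and the remaining coordinates of $x'$ satisfies $f(x) = f(z) = f(x')$, so $f$ factors through $\pi^{\bb\cap\cc\,\aa}$. The intersection axiom guarantees $\bb \cap \cc \in X$, keeping this identity internal to the combinatorics.

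Surjectivity is the easy half. Writing $\obs_\aa = \z_\aa \oplus \b_\aa$ with $\b_\aa = \sum_{\aa > \bb'} \obs_{\bb'}$, the induction hypothesis expands each $\obs_{\bb'}$ for $\bb' < \aa$ as $\sum_{\bb' \aw \cc'} \z_{\cc'}$; since every strict subregion of $\aa$ already occurs as such a $\cc'$, this gives $\b_\aa = \sum_{\aa > \cc'} \z_{\cc'}$ and hence $\obs_\aa = \sum_{\aa \aw \cc'} \z_{\cc'}$.

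Injectivity, i.e. the directness of this sum, is the real obstacle. As $\z_\aa$ supplements $\b_\aa$ and every $\z_{\cc'}$ with $\cc' < \aa$ lies in $\b_\aa$, any relation $\sum z_{\cc'} = 0$ kills its $\z_\aa$-component, reducing matters to directness of $(\z_{\cc'})_{\cc' < \aa}$ inside $\b_\aa$; with spanning already known, this is equivalent to the dimensional identity $\dim \b_\aa = \sum_{\aa > \cc'} \dim \z_{\cc'}$. Both sides are intrinsic: the inductive relation $\dim \obs_{\bb'} = \sum_{\bb' \aw \cc'} \dim \z_{\cc'}$ forces, by Möbius inversion on $\Lambda^{\bb'}$, the choice-independent value $\dim \z_{\bb'} = \sum_{\bb' \aw \cc'} \mu_{\bb'\cc'} |E_{\cc'}|$, while $\dim \b_\aa$ depends only on the coordinate subspaces. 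It therefore suffices to verify the identity for one convenient family of supplements, and I would take the orthogonal interaction subspaces cut out by the conditional expectations $\Pi_\bb : \obs_\aa \aw \obs_\bb$ against the uniform product measure: the lattice identity above makes them commute with $\Pi_\bb \Pi_\cc = \Pi_{\bb\cap\cc}$, so Möbius inversion over the semilattice $\Lambda^\aa$ yields orthogonal idempotents summing to $\id$ and a decomposition $\obs_\aa = \bigoplus_{\aa \aw \bb'} \z_{\bb'}^\perp$ with $\dim \z_{\bb'}^\perp$ obeying the same Möbius formula. This pins down $\dim \b_\aa = \sum_{\aa > \bb'} \dim \z_{\bb'}$ for arbitrary supplements and closes the induction. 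The difficulty is thus entirely concentrated in directness: a naive inclusion–exclusion over the facets of $\aa$ fails because the coordinate subspaces $\obs_{\bb'}$ need not form a distributive lattice, which is exactly why the count must be routed through the commuting conditional expectations furnished by the intersection axiom.
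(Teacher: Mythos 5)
The paper does not actually prove this theorem: it states it as well known and defers to Kellerer and to Mat\'{u}\v{s}'s harmonic-analytic proof in a footnote. Your argument is a correct, self-contained proof, and it is essentially a linear-algebraic repackaging of that cited harmonic-analytic route. The two genuinely load-bearing ideas are both present and correctly placed: the identity $\obs_\bb \cap \obs_\cc = \obs_{\bb \cap \cc}$ (your interchange-of-coordinates argument is the right one, and it is indeed where closure of $X$ under intersection enters), and the fact that the conditional expectations $\Pi_\bb$ for the uniform product measure satisfy $\Pi_\bb \Pi_\cc = \Pi_{\bb\cap\cc}$, which makes the canonical orthogonal interaction subspaces work and pins down $\dim \b_\aa$. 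The bootstrap from the orthogonal supplements to arbitrary ones via a dimension count is sound, since spanning plus the correct total dimension forces directness. One small simplification: the M\"obius-inversion detour to show that $\dim \z_{\bb'}$ is choice-independent is unnecessary, because any two supplements of the fixed subspace $\b_{\bb'}$ inside $\obs_{\bb'}$ automatically have the same dimension $\dim \obs_{\bb'} - \dim \b_{\bb'}$; the M\"obius formula for that dimension is a pleasant corollary rather than a needed lemma. Your closing remark correctly diagnoses why a naive inclusion--exclusion over the $\obs_{\bb'}$ fails (they do not form a distributive lattice of subspaces), which is precisely the point the paper's admissibility axioms are designed to circumvent.
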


The interaction decomposition induces a family of coherent projectors 
$P^{\bb\aa}$ of $\obs_\aa$ onto $\z_\bb \incl \obs_\aa$. 
They induce a map $P^\bb : \obs_0(X) \aw \z_\bb$
accounting for the contributions of all regions containing $\bb$:
\[ P^\bb(u) = \sum_{\aa' \aw \bb} P^{\bb\aa'}(u_{\aa'}) \]
Writing $\z_0(X) \incl \obs_0(X)$ for the direct sum of interaction subspaces, 
they define a projection:
\[ P : \obs_0(X) \aw \z_0(X)\]
We call $P$ the interaction decomposition. 

Given $u \in \obs_0(X)$, we may define a 
global observable $\zeta_\Om(u) \in \obs_\Om$ by: 
\[ \zeta_\Om(u) = \sum_{\aa \in X} u_\aa \]
The interaction decomposition essentially characterises the global sum 
of local observable fields.

\begin{corollary}
    For any $u \in \obs_0(X)$, we have the equivalence:
    \[ P(u) = 0 \quad \eqvl \quad \zeta_\Om(u) = 0 \]
    In particular, $\z_0(X)$ is isomorphic
    to the image of $\zeta_\Om$ in $\obs_\Om$\footnote{
        They both represent the inductive limit of $\obs$ over $X$.
    }. 
\end{corollary}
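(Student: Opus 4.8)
The plan is to show that both conditions are equivalent to the vanishing of every interaction component $P^\bb(u) \in \z_\bb$, and for this I would first rewrite $\zeta_\Om(u)$ through $P$. Applying the Interaction Decomposition (Theorem 1) inside each $\obs_{\aa'}$, every $u_{\aa'}$ splits as $u_{\aa'} = \sum_{\aa' \aw \bb'} P^{\bb'\aa'}(u_{\aa'})$ with $P^{\bb'\aa'}(u_{\aa'}) \in \z_{\bb'}$. Summing over $\aa'$ and exchanging the two summations, I obtain
\[ \zeta_\Om(u) = \sum_{\aa' \in X} u_{\aa'} = \sum_{\bb' \in X} \; \sum_{\aa' \aw \bb'} P^{\bb'\aa'}(u_{\aa'}) = \sum_{\bb' \in X} P^{\bb'}(u), \]
so that $\zeta_\Om(u) = \zeta_\Om(P(u))$ once $P(u) \in \z_0(X)$ is viewed back inside $\obs_0(X)$. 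Since $P(u) = 0$ means precisely that each component $P^\bb(u) \in \z_\bb$ vanishes, the implication $P(u) = 0 \Rightarrow \zeta_\Om(u) = 0$ follows at once.

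The converse requires the sum $\sum_{\bb \in X} \z_\bb$ to be direct inside $\obs_\Om$, equivalently that $\zeta_\Om$ restricted to $\z_0(X)$ be injective. This is the heart of the matter, and the only delicate point is that $\Om$ itself need not belong to $X$, so Theorem 1 does not apply to $\obs_\Om$ directly. I would remove this obstruction by augmenting the system to $\hat X = X \cup \{\Om\}$ (vacuously if $\Om$ is already present), which is again an admissible decomposition since $\Om \cap \bb = \bb$ keeps $\hat X$ stable under intersection and $\vide \in \hat X$. The boundary space and the interaction subspace of any $\bb \in X$ only involve strict subregions of $\bb$, hence are unchanged by the augmentation; choosing any supplement $\z_\Om$ of $\b_\Om = \sum_{\bb \in X} \obs_\bb$ in $\obs_\Om$ and applying Theorem 1 to the top region $\Om \in \hat X$ yields
\[ \obs_\Om = \z_\Om \oplus \bigoplus_{\bb \in X} \z_\bb, \]
so that $\sum_{\bb \in X} \z_\bb$ is direct, which is exactly the desired injectivity.

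It then remains to assemble the pieces. If $\zeta_\Om(u) = 0$, the first paragraph gives $\sum_{\bb \in X} P^\bb(u) = 0$ with each $P^\bb(u) \in \z_\bb$, and directness forces every $P^\bb(u) = 0$, that is $P(u) = 0$; together with the easy implication this proves $P(u) = 0 \eqvl \zeta_\Om(u) = 0$, i.e. $\ker P = \ker \zeta_\Om$. For the final isomorphism I would note that $\zeta_\Om$ sends $\z_0(X)$ isomorphically onto its image: it is injective there by directness, and its image is all of the image of $\zeta_\Om$ since $\zeta_\Om(u) = \zeta_\Om(P(u))$ shows every value is already attained on $\z_0(X)$ (in fact it equals $\b_\Om$). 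The whole difficulty is thus concentrated in the directness statement, and the augmentation of $X$ by $\Om$ disposes of it in a single step; the rest is routine bookkeeping of the interaction decomposition.
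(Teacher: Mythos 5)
Your proof is correct and follows essentially the same route as the paper: the paper's entire argument is the one-line assertion that $\obs_\Om = \z_\Om \oplus \z_0(X)$ when $\Om \notin X$, with the image of $\zeta_\Om$ identified with the boundary observables on $\Om$. Your augmentation of $X$ by $\Om$ and appeal to Theorem 1 is precisely the justification the paper leaves implicit, so you have simply filled in the bookkeeping.
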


\begin{proof}
    When $X$ does not contain $\Om$, we have $\obs_\Om = \z_\Om \oplus \z_0(X)$,
    and the image of $\zeta_\Om$ may be viewed as the subspace of
    boundary observables on $\Om$.
\end{proof}

\subsubsection{Homology Groups.}

The complex of observable fields $\obs(X)$ is also acyclic, 
as might be expected\footnote{
We do not provide a proof here,
a treatment of higher degrees shall be given in later work.
}
from the retractability of $NX$, 
and we only focus on the first homology group.

\begin{thm} 
    The interaction decomposition $P$ induces 
    an isomorphism on the first homology group
    of observable fields:
    \[ \obs_0(X) / \bord \obs_1(X) \sim \z_0(X) \]
\end{thm}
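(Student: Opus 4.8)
The plan is to show that $P$ descends to the quotient $\obs_0(X)/\bord\obs_1(X)$ and that the resulting map $\bar P$ is an isomorphism onto $\z_0(X)$. Surjectivity is immediate: if $u \in \z_0(X)$ then each $u_\aa$ lies in $\z_\aa$, so $P^{\bb\aa}(u_\aa) = 0$ unless $\bb = \aa$, whence $P^\bb(u) = u_\bb$ and $P$ restricts to the identity on $\z_0(X)$. For the descent I would first check $\bord\obs_1(X) \incl \ker P$. By the Corollary above this amounts to $\zeta_\Om \circ \bord = 0$, which follows from a telescoping argument: each $\ph_{\aa'\bb}$ contributes to $\zeta_\Om(\bord\ph)$ once with sign $+$ through the vertex $\bb$ and once with sign $-$ through the vertex $\aa'$, and functoriality of the injections $j$ makes the two cylindrical extensions to $\obs_\Om$ coincide, so the contributions cancel. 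It then remains to prove that $\bar P$ is injective, equivalently that $\ker P = \bord\obs_1(X)$.

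For injectivity I would compute the homology by splitting the complex along interaction subspaces. Using Theorem 1 and the coherence of the projectors $P^{\bb\aa}$ with the injections $j_{\aa\bb}$, every observable field decomposes into its $\z_\bb$-components; since $j$ carries the $\z_\bb$-component of one region to the $\z_\bb$-component of another, the boundary operator respects this splitting. Writing $V_\bb$ for the cone of regions containing $\bb$, this would yield an isomorphism of chain complexes
\[ \obs(X) \;\sim\; \bigoplus_{\bb \in X} \z_\bb \otimes \R(V_\bb) \]
under which $\bord$ becomes $\bigoplus_\bb \mathrm{id}_{\z_\bb} \otimes \bord$, the scalar boundary of the nerve $N V_\bb$.

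Now $\bb$ is a final element of $V_\bb$, exactly as $\vide$ is final in $X$, so $N V_\bb$ is a retractable simplicial complex and $\R(V_\bb)$ has vanishing homology in positive degree, with $H_0 \sim \R$ through the augmentation. Taking homology in degree zero then gives $\obs_0(X)/\bord\obs_1(X) \sim \bigoplus_\bb \z_\bb = \z_0(X)$, and unwinding the augmentation on the $\z_\bb$-summand sends a field $u$ to $\sum_{\aa' \aw \bb} P^{\bb\aa'}(u_{\aa'}) = P^\bb(u)$, so the induced isomorphism is precisely $\bar P$.

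The main obstacle is the second paragraph: one must verify that the decomposition into interaction subspaces is a decomposition of \emph{chain complexes}, i.e. that the coherent projectors commute with the cylindrical extensions appearing in $\bord$ and with the face maps of the nerve. Granting this, the computation reduces to the scalar retractability of each cone $V_\bb$, which is the same mechanism that makes $NX$ acyclic; pleasantly, the argument also recovers the announced acyclicity of $\obs(X)$ in higher degrees at no extra cost.
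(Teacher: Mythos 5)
Your proof is correct, but it reaches injectivity by a genuinely different route from the paper. The paper's argument is constructive: after checking $P \circ \bord = 0$ via the Gauss formula on the cone $V_\bb$, it exhibits, for any $u$ with $P(u)=0$, an explicit flux $\ph_{\aa\bb} = P^{\bb\aa}(u_\aa)$ and computes $\bord_\bb \ph = P^\bb(u) - u_\bb$, so that $u = -\bord\ph$; two lines, degree zero only. You instead split the whole complex as $\bigoplus_{\bb} \z_\bb \otimes \R(V_\bb)$ and reduce to the contractibility of each cone $V_\bb$ (with apex $\bb$ final, mirroring $\vide$ final in $X$). The compatibility step you flag as the main obstacle is in fact already supplied by the paper's setup: Theorem 1 realises the $\z_\cc$-summand of $\obs_\bb$ as $j_{\bb\cc}(\z_\cc)$, and functoriality $j_{\aa\bb}\circ j_{\bb\cc} = j_{\aa\cc}$ shows the cylindrical extensions carry $\z_\cc$-summands to $\z_\cc$-summands (and kill them when $\cc \not\incl \bb$), which is exactly the coherence of the projectors $P^{\bb\aa}$. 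What your route buys is strictly more: it computes the homology in all degrees and proves the acyclicity of $\obs(X)$ that the paper explicitly asserts without proof and defers to later work; what it costs is the extra structural verification and the loss of the explicit contracting flux (though the paper's $\ph_{\aa\bb} = P^{\bb\aa}(u_\aa)$ is precisely the degree-zero piece of the coning homotopy of $V_\bb$ transported through your decomposition, so the two proofs are secretly the same homotopy). Your first paragraph, reducing $P\circ\bord = 0$ to $\zeta_\Om\circ\bord = 0$ via the Corollary of Theorem 1 and a telescoping cancellation, is also valid, and in fact becomes redundant once the direct-sum decomposition is in place, since degree-zero boundaries land in the kernel of each augmentation.
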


\begin{proof}
    The Gauss formula on the cone $V_\bb$ above $\bb$ in $X$ 
    first ensures that $P$ vanishes on boundaries:
    \[ P^\bb(\bord \ph) = \sum_{\aa' \aw \bb} P^\bb(\bord_{\aa'} \ph) 
    = \sum_{\aa' \aw \bb} \sum_{\bb' \not\aw \bb} P^\bb(\ph_{\aa'\bb'}) 
    = 0 \]
    as $P^\bb(\obs_{\bb'})$ is non-zero if and only if $\bb'$ contains $\bb$.
    Let us denote by $[P]$ the quotient map induced by $P$. 
    Given $u \in \obs_0(X)$,
    consider the flux $\ph$ defined by
    $\ph_{\aa\bb} = P^{\bb\aa}(u_\aa)$:
    \[ \bord_\bb \ph = 
    \sum_{\aa' \aw \bb} \ph_{\aa'\bb} - \sum_{\bb \aw \cc'} \ph_{\bb\cc'} 
    = P^{\bb}(u) - u_\bb
    \]  
    When $P(u) = 0$ this gives $u = - \bord \ph$, hence $[P]$ is injective. 
\end{proof}

\begin{corollary} 
    Let $V = \zeta \cdot v$ denote an observable field of $\obs_0(X)$,
    and $c \in \R_0(X)$ denote the right Möbius numbers. 
    We have the equivalence:
    \[ c V \in \Img(\bord)
    \quad \eqvl \quad 
    v \in \Img(\bord) \]
\end{corollary}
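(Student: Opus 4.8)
The plan is to reduce the equivalence to a single identity between interaction components, namely $P(cV) = P(v)$, and then invoke Theorem 2. Indeed, the proof of Theorem 2 shows that the projection $P : \obs_0(X) \aw \z_0(X)$ vanishes exactly on $\bord \obs_1(X)$, so that for any $u \in \obs_0(X)$ one has $u \in \Img(\bord)$ if and only if $P(u) = 0$. Once $P(cV) = P(v)$ is established, both conditions $cV \in \Img(\bord)$ and $v \in \Img(\bord)$ collapse to the single condition $P(v) = 0$, and the desired equivalence follows at once.

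To prove $P(cV) = P(v)$ I would work componentwise over $\bb \in X$. Expanding the zeta action as $V_{\aa'} = \sum_{\aa' \aw \cc''} v_{\cc''}$ (dropping the cylindrical extensions as usual) and pulling out the scalars $c_{\aa'}$ gives
\[ P^\bb(cV) = \sum_{\aa' \aw \bb} c_{\aa'} \, P^{\bb\aa'}(V_{\aa'})
= \sum_{\aa' \aw \bb} c_{\aa'} \sum_{\aa' \aw \cc''} P^{\bb\aa'}(v_{\cc''}). \]

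The decisive step, which I expect to be the main obstacle, is the coherence of the projectors with the injections $j_{\aa'\cc''}$. Because each interaction subspace $\z_\bb$ embeds compatibly in $\obs_{\cc''}$ and in $\obs_{\aa'}$ whenever $\bb \incl \cc'' \incl \aa'$, one should have $P^{\bb\aa'}(v_{\cc''}) = P^{\bb\cc''}(v_{\cc''})$ when $\cc'' \aw \bb$, while $P^{\bb\aa'}(v_{\cc''}) = 0$ when $\bb \not\incl \cc''$, since $v_{\cc''} \in \obs_{\cc''} = \bigoplus_{\cc'' \aw \bb''} \z_{\bb''}$ then carries no $\z_\bb$ component. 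This is precisely where the compatibility of the chosen family $(\z_\aa)$ with the functor $\obs$ enters, and it deserves a careful check. Granting it, the double sum restricts to the pairs $\bb \incl \cc'' \incl \aa'$; swapping the order of summation and applying the inclusion–exclusion identity $\sum_{\aa' \aw \cc''} c_{\aa'} = (c \cdot \zeta)_{\cc''} = 1$ yields
\[ P^\bb(cV) = \sum_{\cc'' \aw \bb} P^{\bb\cc''}(v_{\cc''}) \Bigg( \sum_{\aa' \aw \cc''} c_{\aa'} \Bigg)
= \sum_{\cc'' \aw \bb} P^{\bb\cc''}(v_{\cc''}) = P^\bb(v). \]
As this holds for every $\bb \in X$, we obtain $P(cV) = P(v)$, and the equivalence follows from Theorem 2 as explained above.
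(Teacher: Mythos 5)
Your proof is correct and follows essentially the same route as the paper's: both reduce the corollary to the single identity $P(cV) = P(v)$ and then invoke Theorem 2 (i.e.\ $\Img(\bord) = \Ker(P)$ on $\obs_0(X)$). The only cosmetic difference is that you expand $V = \zeta \cdot v$ and conclude with the inclusion--exclusion identity $(c \cdot \zeta)_{\cc} = 1$, whereas the paper expands $v = \mu \cdot V$ and uses the definition $c_{\bb} = \sum_{\aa' \aw \bb} \mu_{\aa'\bb}$ --- the same computation read in the opposite direction, resting on exactly the coherence property of the projectors $P^{\bb\aa}$ that you single out.
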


\begin{proof}
    According to the theorem, it suffices to show that $P(v) = P(cV)$, as: 
    \[ P^\cc(v) = P^\cc(\mu \cdot V) = 
    \sum_{\aa'\aw \bb' \aw \cc} P^{\cc\bb'} (\mu_{\aa'\bb'} V_{\bb'}) 
    = \sum_{\bb' \aw \cc} P^{\cc\bb'} (c_{\bb'} V_{\bb'}) 
    = P^\cc(c V) \]
    for every $\cc \in X$.
\end{proof}

Theorem 2 and its corollary are at the core of our homological interpretation
for critical points of Bethe free energy and fixed points of Belief Propagation.

\section{First Applications}

\subsection{Critical Points of Bethe Free Energy}

\subsubsection{Gibbs Free Energy.}

We denote by $\Fg_\Om$ the global Gibbs free energy of $\Om$, 
viewed as the smooth functional on $\Delta_\Om \times \obs_\Om$ defined by:
\[ \Fg_\Om(p_\Om, H_\Om) = \E_{p_\Om}[H_\Om] - S(p_\Om) \]
where $S$ denotes Shannon Entropy.
Given a global hamiltonian $H_\Om \in \obs_\Om$, we denote by
$\Fg_\Om^{H_\Om}$ the induced functional on $\Delta_\Om$. 
The thermodynamic equilbrium state\footnote{
    When $\Om$ describes a canonical ensemble, {\it i.e.}
    a closed system interacting with a thermostat.
    The inverse temperature $(k_B T)^{-1}$ is set to $1$.
}
of $\Om$ is the global minimum of its Gibbs free energy $\Fg_\Om^{H_\Om}$.
Describing the cotangent fibers of $\Del_\Om$ by the 
quotient $\obs_\Om / \R$,
this defines a unique statistical state $p_\Om \in \Del_\Om$ by:
\[ \frac {\dr \Fg_\Om} {\dr p_\Om} \simeq
H_\Om + \ln(p_\Om) \simeq 0 \mod \R \] 
This means $p_\Om$ is the global Gibbs state $[\e^{-H_\Om}]$
associated to $H_\Om$. 

This definition of equilibrium being hardly computable in practice,
we shall seek to approximate the marginals 
$\Sigma^{\aa\Om}(p_\Om)$ of the global Gibbs state 
on the regions $\aa \in X$ of an admissible decomposition of $\Om$.

We first introduce the local Gibbs free energy $\Fg_\aa$ 
of a region $\aa \in X$ as the functional on $\Delta_\aa \times \obs_\aa$ 
defined by:
\[ \Fg_\aa(p_\aa, H_\aa) = \E_{p_\aa}[H_\aa] - S(p_\aa) \]
Given a local hamiltonian $H_\aa$ 
we denote by $\Fg^{H_\aa}_\aa$ the induced functional 
on $\Delta_\aa$. 
Its minimum is the local Gibbs state $[\e^{-H_\aa}]$ which would describe the  
equilibrium of $\aa$ had it been isolated from $\Omega \setminus \aa$.

\subsubsection{Bethe Approximation.}

The Bethe-Peierls approach and its refinements\footnote{
    Namely the Cluster Variation Method introduced by Kikuchi \cite{Kikuchi-51},
    whose variational free energy was related to a truncated
    Möbius inversion by Morita \cite{Morita-57}.
}
essentially consist in writing 
an approximate decomposition of $\Fg_\Om$ as a sum of 
local free energy summands:
\[ \Fb = \sum_{\bb \in X} \fg_{\bb} \]
where each $\fg_\bb$ is a smooth functional on $\Delta_\bb \times \obs_\bb$.
This localisation procedure can be made exact on any 
region $\aa \in X$, defining the free energy summands of $\Fg_\aa$
by Möbius inversion:
\[ \Fg_\aa = \sum_{\aa \aw \bb'} \fg_{\bb'} 
\quad \eqvl \quad
 \fg_\bb = \sum_{\bb \aw \cc'} \mu_{\bb \cc'} \; \Fg_{\cc'} \]
The approximation only comes when $\Omega$ is not in $X$,
we may then write the error $\Fg_\Om - \Fb$ 
as a global free energy summand $\fg_\Om$. 
From a physical point of view, we expect $\fg_\Om$ to be small 
when sufficiently large regions are taken in $X$ by extensivity
of the global Gibbs free energy\footnote{
    Schlijper \cite{Schlijper-83} proved this procedure convergent to
    the true free energy per lattice point for the infinite Ising 
    2D-model.
}.  

The functional $\Fb$ is thus
defined for every $p \in \Delta_0(X)$ and $H \in \obs_0(X)$ by:
\[ \Fb(p,H) = \sum_{\bb \in X} c_{\bb} \cdot \Fg_{\bb}(p_{\bb},H_{\bb}) \]
We call $\Fb$ the Bethe free energy. 
Given a reference hamiltonian field $H \in \obs_0(X)$,
we denote by $\Fb^H$ the induced functional on $\Delta_0(X)$.

\subsubsection{Critical Points.}

Because of the Möbius numbers $c_\bb$ appearing in its definition, 
the Bethe free energy $\Fb$ is no longer convex in general, 
and $\Fb^H$ might have a great multiplicity\footnote{
    For numerical studies see \cite{Weiss-97,Murphy-Weiss-99,Knoll-2017},
    A first mathematical proof of multiplicity is given by Bennequin
    in \cite{Bennequin-IEM}.
}
of consistent critical points 
in $\Zp$. 

Using theorem 2, we provide with a rigorous characterisation of 
the consistent statistical fields critical for $\Fb^H$.
We hope to give a better geometric understanding of these fields 
by showing that
they bear a homological relationship with the reference hamiltonian field $H$. 

\begin{thm}
    A non-vanishing consistent statistical field $p \in \Zp$
    is a critical point of the Bethe free energy $\Fb^H$ constrained
    to $\Zp$ if and only if there exists a flux $\ph \in \obs_1(X)$ such that: 
    \[ -\ln(p) \simeq H + \zeta \cdot \bord \ph \mod \R_0(X) \]
\end{thm}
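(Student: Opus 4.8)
The plan is to compute the differential of $\Fb^H$ on the constraint set $\Zp$, to represent its gradient as an observable field, and then to read off criticality as an incidence relation in the complex of observable fields using the adjunction between $d$ and $\bord$ together with Theorem 2 and its corollary.

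First I would differentiate $\Fb^H$ at an interior field $p \in \Del_0(X)$. Since $\Fg_\bb(p_\bb, H_\bb) = \E_{p_\bb}[H_\bb] - S(p_\bb)$ with $-S(p_\bb) = \croc{p_\bb}{\ln p_\bb}$, a variation $\delta p$ tangent to the simplices (so each $\delta p_\bb(1_\bb) = 0$) gives
\[ d\Fb^H(p) \cdot \delta p = \sum_{\bb \in X} c_\bb \, \croc{\delta p_\bb}{H_\bb + \ln p_\bb} = \croc{\delta p}{c \cdot (H + \ln p)}, \]
the terms $\croc{\delta p_\bb}{1_\bb}$ dropping by the normalisation constraint. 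Thus the gradient of $\Fb^H$ is represented, modulo $\R_0(X)$, by the observable field $V := H + \ln p$ weighted by the Möbius numbers, namely $c \cdot V \in \obs_0(X)$.

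Next I would identify the tangent space $T_p \Zp = \ker(d) \cap \{ \delta p : \delta p_\bb(1_\bb) = 0 \}$ inside $\obs^*_0(X)$, criticality meaning that $c \cdot V$ annihilates it. Taking annihilators in $\obs_0(X)$, the adjunction $\croc{d\om}{\ph} = \croc{\om}{\bord\ph}$ gives $(\ker d)^\perp = \Img(\bord)$, while the normalisation directions contribute exactly $\R_0(X)$; hence $p$ is critical if and only if $c \cdot V \in \Img(\bord) \mod \R_0(X)$. To convert this weighted condition into the stated one I would route everything through the interaction projector $P$: since $\Img(\bord) = \ker P$ by Theorem 2, the condition is equivalent to $P(c \cdot V) \in P(\R_0(X))$, and the corollary to Theorem 2 (whose proof establishes $P(\mu \cdot V) = P(c \cdot V)$) then makes it equivalent to $P(\mu \cdot V) \in P(\R_0(X))$, that is to $\mu \cdot V \in \Img(\bord) \mod \R_0(X)$. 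Writing this as $\mu \cdot V \simeq \bord\ph$ and applying the left action of $\zeta$, which inverts that of $\mu$ and preserves $\R_0(X)$, yields $V = H + \ln p \simeq \zeta \cdot \bord\ph \mod \R_0(X)$; rearranging and replacing $\ph$ by $-\ph$ gives $-\ln p \simeq H + \zeta \cdot \bord\ph \mod \R_0(X)$.

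The step I expect to be the main obstacle is the careful bookkeeping of the additive constants and of the normalisation constraint, which is delicate precisely because the Möbius numbers $c_\bb$ may vanish, so that multiplication by $c$ is not invertible and the weighted condition cannot simply be divided out. The resolution is to never invert $c$ but to carry both $c \cdot V$ and its Möbius transform $\mu \cdot V$ through $P$, using the identity $P(c \cdot V) = P(\mu \cdot V)$ so that it is the vanishing of a single homology class, uniformly modulo $\R_0(X)$, that transports the equivalence rather than any division by the $c_\bb$.
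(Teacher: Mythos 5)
Your proposal is correct and follows essentially the same route as the paper: compute the differential of $\Fb^H$ modulo normalisation to obtain $c\,(H+\ln p)\in\Img(\bord)+\R_0(X)$ via the adjunction $\Ker(d)=\Img(\bord)^\perp$, then invoke the corollary of Theorem 2 to trade the Möbius-weighted condition for $H+\ln p\in\zeta\cdot\Img(\bord)+\R_0(X)$. Your explicit routing through the projector $P$ and the identity $P(c\cdot V)=P(\mu\cdot V)$ is exactly the content of that corollary, so the two arguments coincide.
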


\begin{proof}
    To describe the normalisation constraints, 
    we may look at the quotient $\obs_0(X) / \R_0(X)$ as the cotangent space
    of $\Del_0(X)$ at $p$, and write the differential of $\Fb^H$ as:
    \[ \frac {\dr \Fb} {\dr p} \simeq 
    \sum_{\bb \in X} c_\bb \big( H_\bb + \ln (p_\bb) \big) \mod \R_0(X) \] 
    The flux term comes as a collection of Lagrange multipliers
    for the consistency constraints.
    Whenever $p$ is a critical point, 
    the differential of $\Fb^H$ 
    vanishes on $\Ker(d) = \Img(\bord)^\perp$ and we have:
    \[ c\big( H + \ln(p)\big) \in \Img(\bord) + \R_0(X) \]
    The corollary of theorem 2 is crucial\footnote{
        The proof given in \cite{Yedidia-2005} is problematic
        when there exists $\bb$ such that $c_\bb = 0$.
    }
    to state that this implies:
    \[ H + \ln(p) \in \zeta \cdot \Img(\bord)  + \R_0(X) \]
    as $\zeta \cdot \R_0(X) = \R_0(X)$. 
\end{proof}

\subsection{Belief Propagation as a Transport Equation}

\subsubsection{Effective Energy.} 
For every $\aa \aw \bb$ in $X$, we call effective energy the smooth submersion
$\Fh^{\bb\aa}$ of $\obs_\aa$ onto $\obs_\bb$ defined by:
\[ \Fh^{\bb\aa}(U_\aa) = -\ln\big( \Sigma^{\bb\aa}(\e^{-U_\aa}) \big) \]
From a physical perspective, 
one may think of $\Fh^{\bb\aa}(U_\aa)(x_\bb)$ as the conditional free energy
of $\Lambda^\aa$ given $x_\bb$. 
Mathematically, we view effective energy as interlacing marginal 
projections with the diffeomorphisms between each $\obs_\cc$ and 
the orbit of the counting measure $1^\flat_\cc \in \obs^*_\cc$ under $G_\cc$:
\[ \Fh^{\bb\aa} = \mlog_\bb \circ \Sigma^{\bb\aa} \circ \expm_\aa \]
which implies functoriality in the category of smooth manifolds.
A weaker yet fundamental property is that Gibbs state maps
induce the following commutative diagram between 
effective energy and marginalisation:
\[ \Sigma^{\bb\aa} \circ \gibbs_\aa = \gibbs_\bb \circ \Fh^{\bb\aa} \]

Effective energy also induces a smooth functional 
$\nabF$ from $\obs_0(X)$ to $\obs_1(X)$, 
which we call effective gradient. It is defined by:
\[ \nabF(H)_{\aa\bb} = H_\bb - \Fh^{\bb\aa}(H_\aa) \]
Given a field of potentials $h$, we will be interested
in the effective gradient of its local
hamiltonian field $H = \zeta \cdot h$. 
Letting $\Phi = - \nabF \circ \zeta$ gives:
\[ \Phi_{\aa\bb}(h) = \Fh^{\bb\aa}
\bigg(  \sum_{\bb' \in \Lambda^\aa \setminus \Lambda^\bb} h_{\bb'} 
\bigg) \]
which is the effective contribution of $\Lambda^\aa \setminus \Lambda^\bb$
to the energy of $\Lambda^\bb$. 

\subsubsection{Belief Propagation.}

Given a field of potentials $h \in \obs_0(X)$,
consider the following transport equation with initial condition $h$:
\[ \dot u = \bord \Phi(u) \]
We denote by $\Xi = \bord \Phi$ the induced vector field on $\obs_0(X)$
and by $(\e^{t\Xi})$ its flow. 
In absence of normalisation\footnote{
    This is the case of the original algorithm on trees as introduced by Pearl
    in \cite{Pearl-88}.
}, we claim that Belief Propagation 
is equivalent
to the naive Euler scheme\footnote{ 
    BP is actually for $\tau = 1$,  
    a different time scale would appear as exponent in 
    the multiplicative formulation.
}
approximating the flow of $\Xi$ by: 
\[ \e^{n \tau \Xi} \simeq (1 + \tau \Xi)^n \]
The conjugated vector field $\Xi^\zeta = \zeta \circ \Xi \circ \mu$
on local hamiltonians induces the dynamic over beliefs through
the isomorphism of $\obs_0(X)$ with $G_0(X)$.

We believe this new perspective reveals 
a strong homological character of Belief Propagation.
Denoting by $T_h(\ph)$ the transport of $h$ by a flux 
$\ph \in \obs_1(X)$:
\[ T_h(\ph) = h + \bord \ph \]
it is clear that the potentials $u$ remain in the image of $T_h$.
This yields a maximal set of conserved quantities in light
of theorem 2.

\begin{thm}
Let $q \in G_0(X)^\N$ denote a sequence 
of belief fields obtained by iterating BP.
The following quantity remains constant:
    \[ q_\Om = \prod_{\aa \in X} (q_\aa)^{c_\aa} \]
\end{thm}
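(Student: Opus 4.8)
The plan is to translate the multiplicative invariant $q_\Om$ into the additive world of potentials, where the homological content of Theorem 2 applies directly. First I would record how beliefs depend on the transported potential: writing $u \in \obs_0(X)$ for the field evolving under $\dot u = \bord\Phi(u)$, the associated hamiltonian field is $H = \zeta \cdot u$ and the belief field is $q = \expm(H)$, so that $q_\aa = \e^{-(\zeta\cdot u)_\aa}$ for each region. Since the Euler scheme $u_{n+1} = u_n + \tau\,\bord\Phi(u_n)$ only ever adds boundary terms, every iterate satisfies $u_n - u_0 \in \Img(\bord)$; this is the precise sense in which the potentials remain in the image of the transport map $T_{u_0}(\ph) = u_0 + \bord\ph$.

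Next I would compute $q_\Om$ in the additive picture. Applying $\mlog_\Om$ and pulling each $q_\aa$ back to $E_\Om$ by cylindrical extension turns the product into a sum of potentials:
\[ -\ln q_\Om = \sum_{\aa \in X} c_\aa\, H_\aa = \sum_{\aa \in X} c_\aa \sum_{\aa \aw \bb'} u_{\bb'}. \]
Exchanging the order of summation and invoking the inclusion--exclusion identity $\sum_{\aa' \aw \bb} c_{\aa'} = 1$ satisfied by the right Möbius numbers, the weights collapse and I expect to obtain exactly the global sum
\[ -\ln q_\Om = \sum_{\bb' \in X} u_{\bb'} = \zeta_\Om(u). \]
Thus $q_\Om = \expm_\Om\big(\zeta_\Om(u)\big)$ is the exponential of the single global observable $\zeta_\Om(u)$ attached to the current potential.

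It then remains to show that $\zeta_\Om(u)$ is conserved along the dynamics. Since the iterates differ by boundaries, it suffices to check that $\zeta_\Om$ annihilates $\Img(\bord)$. I would deduce this from the Gauss formula (Proposition 2) on the full cone $V_\vide = X$, whose coboundary is empty, giving $\sum_{\aa \in X} \bord_\aa \ph = 0$ as a global observable for every $\ph \in \obs_1(X)$; equivalently one may combine Theorem 2, which makes $P$ vanish on boundaries, with its corollary, which identifies $\zeta_\Om(u) = 0$ with $P(u) = 0$ and hence lets $\zeta_\Om$ factor through the conserved projection $P(u)$. Either route yields $\zeta_\Om(u_n) = \zeta_\Om(u_0)$, so $q_\Om$ is constant.

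The main obstacle I anticipate is one of bookkeeping rather than depth: keeping the cylindrical extensions $j_{\Om\aa}$ straight so that the product $\prod_\aa (q_\aa)^{c_\aa}$ is genuinely read in $G_\Om$, and making sure the exponents $c_\aa$ recombine with the $\zeta$-convolution into the single homological invariant $\zeta_\Om(u)$. Once the exponent is recognised as $\zeta_\Om(u)$, conservation is an immediate consequence of the acyclicity packaged in Theorem 2 and the Gauss formula.
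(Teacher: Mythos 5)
Your proposal is correct and follows essentially the same route as the paper's (much terser) proof: read the beliefs as $q = \e^{-U}$ with $U = \zeta\cdot u$, collapse $\sum_\aa c_\aa U_\aa$ to $\zeta_\Om(u)$ via the inclusion--exclusion identity $\sum_{\aa'\aw\bb} c_{\aa'} = 1$, and observe that $\zeta_\Om$ is constant along the dynamics because the iterates of $u$ differ by elements of $\Img(\bord)$, which $\zeta_\Om$ annihilates by Theorem~2 and its corollary. Your alternative closing step via the Gauss formula on the full cone $V_\vide = X$ (whose coboundary is empty) is a slightly more elementary way to see $\zeta_\Om\circ\bord = 0$ directly, but it is only a local variation on the same argument.
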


\begin{proof}
The fact that $u \in \Img(T_h)$ is equivalent to $P(u) = P(h)$,
itself equivalent to $U_\Omega = H_\Omega$ where:
\[ U_\Omega = \sum_{\aa \in X} u_\aa = \sum_{\bb \in X} c_\bb U_\bb\]
The beliefs are given by $q = \e^{-U}$. 
\end{proof}

Belief Propagation is usually viewed as a dynamic over
messages. 
Letting $u = h + \bord \ph$, we may recover the dynamic on potentials 
while keeping messages in memory by considering the evolution:
\[ \dot \ph = \Phi\big(T_h (\ph)\big) \]
We denote by $\Xi'_h = \Phi \circ T_h$ the induced vector field on $\obs_1(X)$. 
Although it converges on trees, 
this algorithm is generally divergent in presence of loops
and beliefs need to be normalised in order to attain
projective equilibria.

\subsubsection{Normalisation.}

Because the effective gradient $\nabF$ is additive along constants 
and both $\zeta$ and $\bord$ preserve scalar fields, we have:
\[ \Xi \big(u + \R_0(X) \big) \incl \Xi(u) + \R_0(X) \]
and $\Xi$ induces a vector field $[\Xi]$ on the space $\Del_0(X)$ of non-vanishing
statistical fields, identified with $\obs_0(X) / \R_0(X)$.
Dynamics over normalised beliefs $[\Xi^\zeta]$ and over messages 
$[\Xi'_h]$ are similarly
induced by $\Xi^\zeta$ and $\Xi'_h$.

Given an initial hamiltonian field $H$ and an initial flux $\ph^0$, 
let us denote by:
\[  U^t = \e^{t\Xi^\zeta} \cdot \; U^0 \]
the integral curve of $\Xi^\zeta$ originating from 
$U^0 = H + \zeta \cdot \bord \ph^0$.
The normalised flow is recovered by defining 
beliefs as $q = [\e^{-U}]$ and messages as $m = [\e^{-\ph}]$ 
which gives:
\[ -\ln(q) \simeq H + \zeta \cdot \bord \ph \mod \R_0(X) \]
In virtue of theorem 3, this implies that $q$ is a critical point of the Bethe free
energy $\Fb^H$ constrained to $\Zp$ 
if and only if $q$ is a consistent statistical field.

Considering all beliefs that may be obtained by any choice of messages,
we may define a subspace of admissible beliefs as:
\[ \Del_H = \left\{ [\e^{- U}] \st
 U \in H + \zeta \cdot \Img(\bord) \right\} 
 \incl \Del_0(X) \]
Following Yedidia {\it et. al.}, we might call any consistent 
$q \in \Del_H \cap \Zp$ a fixed point of Belief Propagation\footnote{
    This terminology is somewhat ambiguous as 
    it does not mean that $q$ may be obtained by iterating BP from $[\e^{-H}]$.
}.
With this terminology, we can rephrase their initial claim \cite{Yedidia-2005}:

\begin{thm}
    Fixing a reference hamiltonian field $H$,
    fixed points of Belief Propagation are in one to one correspondence
    with critical points of the Bethe free energy.
\end{thm}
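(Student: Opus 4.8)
The plan is to obtain this statement as a direct consequence of Theorem 3 together with the definition of the admissible belief space $\Del_H$. The essential observation is that, once beliefs rather than messages are taken as the primary objects, the two sides of the asserted correspondence describe literally the same subset of $\Del_0(X)$, namely $\Del_H \cap \Zp$; the correspondence will then be realised by the identity map, so that no nontrivial matching needs to be constructed.

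First I would unfold the definition of a fixed point of Belief Propagation. By construction such a $q$ is a consistent non-vanishing statistical field lying in $\Del_H$, that is $q \in \Del_H \cap \Zp$, and membership $q \in \Del_H$ means exactly that there is a flux $\ph \in \obs_1(X)$ with $-\ln(q) \simeq H + \zeta \cdot \bord \ph \mod \R_0(X)$. Next I would invoke Theorem 3, which asserts that a field $p \in \Zp$ is critical for $\Fb^H$ constrained to $\Zp$ if and only if there exists a flux satisfying the very same relation. Comparing the two characterisations shows that the set of critical points of the Bethe free energy inside $\Zp$ equals $\{ p \in \Zp \st p \in \Del_H \} = \Del_H \cap \Zp$, which is precisely the set of fixed points of Belief Propagation; the one-to-one correspondence is thus the identity on $\Del_H \cap \Zp$.

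The delicate point is not in this final identification but is already buried inside Theorem 3, and I would flag it as the genuine obstacle: passing from the Lagrange-multiplier condition $c(H + \ln p) \in \Img(\bord) + \R_0(X)$, obtained by requiring the differential of $\Fb^H$ to vanish on $\Ker(d) = \Img(\bord)^\perp$, to the admissibility condition $H + \ln p \in \zeta \cdot \Img(\bord) + \R_0(X)$ defining $\Del_H$. These are not interchangeable by naive cancellation, since some right Möbius numbers $c_\bb$ may vanish, and it is exactly the corollary of Theorem 2, equivalently the homological isomorphism $\obs_0(X)/\bord\obs_1(X) \sim \z_0(X)$, that licenses this passage. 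Once that equivalence is granted the remaining content is formal, the two sets coinciding outright. The only residual subtlety lives one level down, among messages, where many fluxes $\ph$ may yield the same belief $q$; but defining a fixed point as a belief in $\Del_H \cap \Zp$ rather than as a message configuration precisely collapses that ambiguity.
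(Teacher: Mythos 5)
Your proposal is correct and matches the paper's own (implicit) argument: the paper defines a fixed point of Belief Propagation as an element of $\Del_H \cap \Zp$ and observes, immediately before stating the theorem, that by Theorem 3 this is exactly the set of consistent critical points of $\Fb^H$, so the correspondence is the identity. Your remark that the real content is hidden in Theorem 3's appeal to the corollary of Theorem 2 (to handle vanishing Möbius numbers $c_\bb$) is precisely the point the paper itself flags in its footnote on the proof of \cite{Yedidia-2005}.
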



\bibliographystyle{alpha}
\bibliography{biblio}

\end{document}